\documentclass[10pt,conference,letterpaper,twocolumn]{IEEEtran}
\usepackage[dvips]{graphicx}
\usepackage{cite}
\usepackage{epsfig}
\usepackage{amsmath,amssymb,amsfonts,amstext,amsbsy,amsopn,dsfont}
\usepackage{cases}
\usepackage{sublabel}
\usepackage{array}

\newtheorem{theorem}{\textbf{Theorem}}

\newtheorem{lemma}{\textbf{Lemma}}
\newtheorem{remark}{\textbf{Remark}}
\newtheorem{definition}{\textbf{Definition}}

\newcommand{\defn}{\triangleq}
\newcommand{\dif}{\textmd{d}}
\newcommand{\ie}{i.e., }

\begin{document}

\title{Multicast Capacity Scaling of Wireless Networks with Multicast Outage}

\author{Chun-Hung Liu and Jeffrey G. Andrews \\Department of Electrical and Computer Engineering \\The University of Texas at Austin, Austin TX 78712-0204, USA\\ Email: chliu@mail.utexas.edu and jandrews@ece.utexas.edu}

\maketitle

\begin{abstract}
Multicast transmission has several distinctive traits as opposed to more commonly studied unicast networks.  Specially, these include (i) identical packets must be delivered successfully to several nodes, (ii) outage could simultaneously happen at different receivers, and (iii) the multicast rate is dominated by the receiver with the weakest link in order to minimize outage and retransmission. To capture these key traits, we utilize a Poisson cluster process consisting of a distinct Poisson point process (PPP) for the transmitters and receivers, and then define the multicast transmission capacity (MTC) as the maximum achievable multicast rate times the number of multicast clusters per unit volume, accounting for outages and retransmissions. Our main result shows that if $\tau$ transmission attempts are allowed in a multicast cluster, the MTC is $\Theta\left(\rho k^{x}\log(k)\right)$ where $\rho$ and $x$ are functions of $\tau$ depending on the network size and density, and $k$ is the average number of the intended receivers in a cluster. We also show that an appropriate number of retransmissions can significantly enhance the MTC.
\end{abstract}

\section{Introduction}

Multicast refers to the scenario whereby a transmitter needs to send a packet to multiple receivers.  In a wireless network, this creates a two-edged sword. On one hand, the broadcast nature of wireless transmission assists multicast; but roughly uncorrelated outage probabilities at each receiver (due to spatially distinct fading and interference) require retransmissions that cause interference and waste.  Multicast is an important aspect of sensor and tactical networks, and increasingly in commercial networks where streaming is supported.  However, the literature on multicast is minuscule compared to unicast. In this work we attempt to investigate the fundamental throughput limits of multicast transmission and we develop a metric based on spatial outage capacity which we term \emph{multicast transmission capacity} (MTC).

In order to characterize the MTC in a wireless network we propose a multicast network model in which each transmitter has an intended multicast region (called a cluster) where all the intended receivers are uniformly and independently scattered, and hence a Poisson cluster process can be reasonably used to model the transmit-receiver location statistics. The active transmitters are modeled as a stationary Poisson point process (PPP) and their associated receiver nodes in the cluster are also a stationary PPP, as shown in Fig. \ref{Fig:MulticastModel}. In other words, each cluster is randomly located in the network and comprises a multicast session.  This paper will develop interference and outage expressions for this model, and analyze some important cases of the network model and design space.

\begin{figure}[h]
  \centering
  \includegraphics[width=3.0in,height=1.8in]{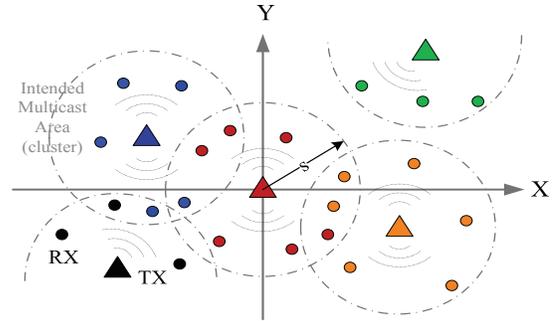}\\
  \caption{Multicast transmission model in a 2-dimensional ad hoc network: Each transmitter (triangle) and its intended receivers (small circles) are indicated by the same color in a cluster of radius $s$.}
  \label{Fig:MulticastModel}
\end{figure}


Some previous works, such as \cite{SSXLRS07,XYL09,AKVRRR06,PJGR05,AKHVRRR06,AKHRR07}, have made significant progress in studying the multicast or broadcast capacity. For example, in \cite{SSXLRS07} the protocol model is used where source nodes and their multicast destinations are randomly chosen. The multicast capacity is defined as the sum rate of all multicast flows and it is obtained as a function of the number of multicast sources. In \cite{XYL09}, the multicast capacity under the protocol model is defined as the transmission rate summed over all of the multicast traffic flows in the network. Its scaling characterization is obtained by the number of receivers in each multicast session. In \cite{RZHENG06}, the physical model and a stationary PPP of the nodes in the network are considered. It showed that the broadcast capacity is a constant factor of the computed upper bound when the number of nodes goes to infinity under a constant node intensity.

The multicast and broadcast capacities in the prior works are not investigated from the multi-receiver outage point of view and thus their scaling results cannot provide us retransmission guidelines for outage reduction and capacity enhancement. The multicast capacity problem here is studied from a multicast outage perspective. The main issues we would like to clarify are that how many multicast sessions which interfere each other can coexist if all receivers in a multicast session need to receive the packet from their transmitter, and when retransmission is beneficial or detrimental to the multicast capacity.

In this paper, we introduce the MTC in a $d$-dimensional network, which is defined as the maximum achievable multicast rate per transmission attempt times the maximum number of the coexisting clusters in the network per unit volume subject to decoding delay and multicast outage constraints. The idea of defining MTC is from the transmission capacity framework originated in \cite{SWXYJGAGDV05}. \emph{The decoding delay constraint here means a transmitter can multicast a packet to all of its intended receivers up to $\tau\in\mathbb{N}_+$ transmission attempts and multicast outage happens when any of the intended receivers in a cluster does not receive the information multicasted by their transmitter during $\tau$ attempts.}

\textbf{Main Contributions}. In this work our first contribution is introducing a first way to define the MTC with multicast outage, and propose some new cluster-based definitions for the largeness and denseness of a network in order to characterize the scaling behaviors of the MTC under different conditions. We show that the MTC scaling can be expressed in a general form of $\Theta(\rho k^{x}\log(k))$ where $\rho$ and $x$ are given in Table \ref{Tab:MainResultsMTC} for different network conditions, $k$ is the average number of the intended receivers in a cluster. From the scaling characterizations, we know retransmissions have a significant effect on the MTC and certain number of retransmissions could enhance it. In addition, we also show that the MTC scaling is not affected by the fading model of communication channels.

\begin{table}[h]
\centering
\caption{Main Results on Multicast Transmission Capacity}\label{Tab:MainResultsMTC}
\begin{tabular}{|c|c|c|c|}
\hline
\bf MTC & \multicolumn{3}{c|} {$\Theta(\rho\,k^{x}\log(k))$}\\
\hline
\bf Network Condition & Dense & Large  & Large Dense \\
\hline
$x$ & $-\frac{1}{\tau}$ & $-\left(1+\frac{1}{\tau}\right)$ & $-\left(\frac{\tau+2}{2\tau}\right)$\\
\hline
$\rho$ & \multicolumn{3}{c|} {$\frac{1}{\tau^2}\sqrt[\tau]{\epsilon(\tau+1)}$}\\
\hline
\end{tabular}
\end{table}

\section{Network Model and Preliminaries}\label{Sec:MTCmodelPrelims}

The multicast network model should be with tractability of finding the necessary information for MTC. The following Poisson cluster model is proposed because it not only captures the multicast behavior in the network but also makes all derivations easier.

\subsection{Multicast Transmission Model}\label{Sec:MultiTranModelRxConnProc}
In the network, each transmitter has a multicast cluster of equal volume and its receive nodes in the cluster suffer aggregate interference from a Poisson field of transmitters. Specifically, we assume that the network is operating a slotted ALOHA protocol and the distribution of the transmitting nodes in the network is a stationary Poisson point process (PPP) $\Phi_t$ of intensity $\lambda_t$. As shown in Fig. \ref{Fig:MulticastModel}, any transmitter $X_i\in\Phi_t$ has its own intended multicast cluster $\mathcal{R}_i$ where all of its intended receivers are uniformly and independently distributed and they also form a stationary PPP $\Phi_{r_i}$ of intensity $\lambda_r$. \emph{Note that each cluster could contain other transmitters and unintended receivers except its own transmitter and intended receivers.}

Accordingly, the multicast transmission sessions in the network are essentially the Poisson cluster process $Z_i\defn \Phi_{r_i}\cup X_i$, \ie each transmitter is a \emph{parent} node associated with a cluster of receive \emph{daughter} nodes. The cluster processes $\{Z_{i}\}$ corresponding to different transmitters $\{X_i\}$ are assumed to be independent so that the superposition of all clusters yields the resulting cluster process $\Phi=\bigcup_{X_i\in\Phi_t}Z_i$ of intensity $\lambda=k\,\lambda_t$ where $k=\mu_r\,\lambda_r$ is the average number of the intended receivers in each cluster assuming all $\{\mathcal{R}_i, \forall i\in \mathbb{N}\}$ have the same volume $\mu_r$\footnote{$\mu_r$ is the Lebesgue measure of $\mathcal{R}_i$. For example, if $\mathcal{R}_i$ is a \emph{d}-dimensional ball of radius $s$ then $\mu_r=\mu_u\,s^d$, where $\mu_u =\sqrt{\pi^d}/\Gamma(1+d/2)$ is the Lebesgue measure of a \emph{d}-dimensional unit ball\cite{DSWKJM96}.}. The distribution of the intended receiver nodes in each cluster is assumed as a \emph{marked} PPP denoted by $\Phi_{r_i} \defn \{(Y_{ij},H_{ij}): Y_{ij}\in \mathcal{R}_i,j\in\mathbb{N}\}$, where $H_{ij}$ is the fading channel gain between transmitter $X_i$ and its intended receiver $Y_{ij}$. Similarly, the distribution of the transmitters in the network is also a marked PPP, \ie $\Phi_t \defn \{(X_i,\{\tilde{H}_{ij}\}), i,j\in \mathbb{N}\}$ where $\tilde{H}_{ij}$ denotes the fading channel gain between transmitter $X_i$ and the receiver $Y_{0j}$ located in cluster $\mathcal{R}_0$. All the fading channel gains are i.i.d. with  probability density function (PDF) $f_{H}(h)$.

Without loss of generality, the MTC can be evaluated in the reference cluster $\mathcal{R}_0$ whose transmitter $X_0$ is located at the origin. We condition on this typical transmitter $X_0$ resulting in what is known the Palm distribution for transmitting nodes in the $d$-dimensional Euclidean space \cite{DSWKJM96}. It follows by Slivnyak's theorem \cite{DSWKJM96} that this conditional distribution also corresponds to a homogenous PPP with the same intensity and an additional point at the origin. The signal propagation in space is assumed to undergo path loss and fading. The path loss model between two nodes $X$ and $Y$ used in this paper is
\begin{equation}\label{Eqn:PathLossModel}
\|X-Y\|^{-\alpha}\defn\begin{cases}|X-Y|^{-\alpha},&\quad \text{if}\,\, |X-Y|\geq 1 \\ 0, &\quad \text{otherwise}, \end{cases}
\end{equation}
where $|X-Y|$ denotes the Euclidean distance between nodes $X$ and $Y$, and $\alpha>d$ is the path loss exponent.
The reason of using the model in \eqref{Eqn:PathLossModel} is because the model $|\cdot|^{-\alpha}$ does not behave well in the near field of each transmitter and it thus leads to an unbounded mean of the shot noise process. The Nakagami-$m$ fading model\footnote{In this paper, the Nakagami-$m$ fading channels are always assumed to have unit mean, unit variance and $m\in\mathbb{N}_+$.} is adopted here because it covers several different fading models, such as Rayleigh fading, Rician fading and no fading, etc. By using this model, we can observe if different fading models affect MTC or not.

The intended multicast region $\mathcal{R}_i$ of transmitter $X_i$ is confined by $\mathcal{B}(X_i,s)$ which is a $d$-dimensional ball centered at transmitter $X_i$ with radius $s \geq 1$. Each receiver in $\Phi_r$\footnote{Here $\Phi_r$ means $\Phi_{r_0}$. The subsubscript ``0'' of $\Phi_{r_0}$ is dropped for notation simplification. Since all of the following analysis is based on the nodes in the reference cluster $\mathcal{R}_0$, the subscript or subsubscript 0 of some variables will not be explicitly indicated if there is no ambiguity. For instance, $Y_j$ and $H_j$ in $\mathcal{R}_0$ actually stand for $Y_{0j}$ and $H_{0j}$, respectively.} is able to successfully receive the message if its SIR is greater or equal to the target threshold $\beta$. That is, receiver node $Y_j\in \Phi_r$ is ``connected'' to the typical transmitter if
\begin{equation}\label{Eqn:SINRwThreshold}
\frac{\,H_j \|Y_j\|^{-\alpha}}{\,I_t}\geq \beta,
\end{equation}
where all the transmitters are assumed to use the same transmit power, the network is interference-limited, and $I_t$ is the aggregate interference at receive node $Y_j$ and a sum over the marked point processes. Namely,
\begin{equation}\label{Eqn:PoissonShotNoise1}
    I_t=\sum_{X_{i}\in\Phi_t\setminus\{X_0\}} \tilde{H}_{ij}\|X_{i}-Y_j\|^{-\alpha},
\end{equation}
which is a Poisson shot noise process, and $\tilde{H}_{ij}$ is the fading channel gain from transmitter $X_i$ to receiver $Y_j$ in $\mathcal{R}_0$. Since $\Phi_t$ is stationary,  according to Slivnyak's theorem the statistics of signal reception seen by receiver $Y_j$ is the same as that seen by any other receivers in the same cluster. Thus $I_t$ can be evaluated at the origin, \ie \eqref{Eqn:PoissonShotNoise1} can be rewritten as $I_t=\sum_{X_i\in\Phi_t\setminus\{X_0\}}\tilde{H}_i \|X_i\|^{-\alpha}.$

Suppose the decoding delay is up to the lapse of $\tau$ transmission attempts for a transmitter. The connected receiver process $\hat{\Phi}_{c_i}$ for the $i$-th transmission is denoted by
\begin{equation}
\hat{\Phi}_{c_i} = \left\{(Y_j,H_{j_i})\in\Phi_r:  H_{j_i}\geq\beta\|Y_j\|^{\alpha}I_t\right\},
\end{equation}
where $\{H_{j_i}\}$ are i.i.d. for all $i\in [1,\cdots,\tau]$. Also, let $\Phi_c$ be the connected receiver process at the $\tau$th attempt, \ie it is the set of all intended receivers in a cluster connected by their transmitter during the decoding delay, and thus it can be written as $\Phi_c = \bigcup_{i=1}^{\tau} \hat{\Phi}_{c_i}$. In other words, the connected receiver process can be described by a \emph{filtration} process\footnote{A filtration process means $\Phi_{c_1}\subseteq \Phi_{c_2}\cdots\subseteq \Phi_{c_j}$, and for any set $\mathcal{A}\subseteq \mathcal{R}_0$, $\Phi_{c_j}(\mathcal{A})\rightarrow \Phi_r(\mathcal{A})$ almost surely as $j\rightarrow \infty$ where $\Phi(\mathcal{A})$ denotes the random number of point process $\Phi$ enclosed in set $\mathcal{A}$.}.

\subsection{Multicast Transmission Outage}\label{Sec:MultiTransOutage}
The transmission capacity of an ad hoc network introduced in \cite{SWXYJGAGDV05} is defined based on point-to-point transmission with an outage probability constraint $\epsilon\in(0,1)$, and is given by
\begin{equation}\label{Eqn:UnicastTC}
c_{\epsilon} = b\,\bar{\lambda}_t\,(1-\epsilon),
\end{equation}
where $b$ is the constant transmission rate a communication link can support (for example, about $\log_2(1+\beta)$), and $\bar{\lambda}_t$ is the maximum contention intensity subject to an outage probability target $\epsilon$. However, \eqref{Eqn:UnicastTC} cannot be directly applied to multicast because the multicast rate will be affected by $\mu_r$ and $\bar{\lambda}_t$, and the outage of a multicast transmission is not point-to-point but \emph{point-to-multipoint}.

Since no desired receiver can be assumed to be dispensable, a reasonable way to define outage is when \emph{any of the intended receivers of a transmitter does not receive a multicasted packet} during a period of time up to the decoding delay. Thus, a multicast outage event of each multicast cluster can be described as $\mathcal{E} =\{\Phi_c\subset\Phi_r\}$. The probability of $\mathcal{E}$ can be characterized by the intensity of the connected receivers during the lapse of $\tau$ attempts as follows:
\begin{eqnarray}\label{Eqn:MultiOutProb1}
\mathbb{P}[\mathcal{E}] &\defn& 1-\mathbb{P}[\{\Phi_r\setminus\Phi_c\}=\emptyset]\nonumber\\
&=& 1-\exp\left\{-\int_{\mathcal{R}_0}(\lambda_r-\lambda_c(Y,\tau))\,\mu(\dif Y)\right\},
\end{eqnarray}
where $\mu$ is a $d$-dimensional Lebesgue measure. Since all of the intended receivers are uniformly distributed in $\mathcal{R}_0$, \eqref{Eqn:MultiOutProb1} can be rewritten as
\begin{equation}\label{Eqn:MultiOutProb2}
\mathbb{P}[\mathcal{E}]=
1-\exp\left\{-\mu_r\,(\lambda_r-\mathbb{E}_{R}[\lambda_c(R,\tau)])\right\}\leq \epsilon,
\end{equation}
where $R\in[0,s]$ is a random variable whose PDF is $f_R(r)=d\,\frac{r^{d-1}}{s^d}$. The outage probability in \eqref{Eqn:MultiOutProb2} cannot exceed its designated upper bound $\epsilon$ which is usually a small value.

\begin{remark}
When a transmitter has a fixed number of the intended receivers in a cluster, the probability of multicast outage based on our definition is too complex to be calculated due to the spacial correlation of interference \cite{RKGMH09}. Since $\Phi_r$ is a stationary PPP and $\Phi_c$ is a nonhomogeneous PPP (this point will be proved in Section \ref{Sec:RxConnectedPro}), we can obtain the result in \eqref{Eqn:MultiOutProb1} so that the multicast outage probability can be easily carried out via finding $\mathbb{E}_R[\lambda_c(R,\tau)]$ in \eqref{Eqn:MultiOutProb2}. Since $\mathbb{E}_R[\lambda_c(R,\tau)]$ is a monotonically decreasing function of $\lambda_t$, $\bar{\lambda}_t$ can be reached by the lower bound on $\mathbb{E}_R[\lambda_c(R,\tau)]$ obtained from \eqref{Eqn:MultiOutProb2}.
\end{remark}

\subsection{Definitions of MTC, Largeness and Denseness of Networks}\label{Sec:DefnMTC}
The MTC defined in the following is similar to the idea of the transmission capacity defined in \eqref{Eqn:UnicastTC}, but there are some subtle differences. Since the multicast outage probability is upper bounded by a small $\epsilon$, a Taylor expansion gives $\bar{\lambda}_t(\epsilon)=\bar{\lambda}_{\epsilon}(\epsilon)+O(\epsilon^2)$, where $\bar{\lambda}_{\epsilon}$ is a reasonable approximation of $\bar{\lambda}_t$ for small $\epsilon$. For simplicity, we will focus the analysis on $\bar{\lambda}_{\epsilon}$. In addition, the transmission rate $b$ in \eqref{Eqn:UnicastTC} becomes a \emph{multicast rate}, which is not necessary equal to a constant.
\begin{definition}\label{Def:MTC}
The multicast transmission capacity with the multicast outage probability defined in \eqref{Eqn:MultiOutProb1} for small $\epsilon$ is defined as
\begin{equation}\label{Eqn:MTCwoACK}
 C_{\epsilon} \defn \frac{1}{\tau}\,b\,\bar{\lambda}_{\epsilon}\,(1-\epsilon),
\end{equation}
where $b$ is the maximum achievable multicast rate on average for every cluster.
\end{definition}

$C_{\epsilon}$ essentially gives area spectral efficiency of cluster-based multicast transmission. The following definitions of largeness and denseness of a network will be needed to acquire the scaling characterizations of the MTCs in the subsequent analysis. They are defined based on the large average number of the intended receivers $k$ in a cluster.
\begin{definition}\label{Def:DenseLargeNetwork}
(a) We say a network is ``large'' if the volume $\mu_r$ of a cluster in the network is sufficiently large such that for fixed $\lambda_r$ we have $k\gg 1$. (b) If the intended receiver intensity is sufficiently large such that for fixed volume $\mu_r$ we have $k\gg 1$, then such a network is called ``dense''. (c) A ``large dense'' network, it means that clusters in a network have a sufficiently large size as well as receiver intensity; namely, $\lambda_r\propto \mu_r$ and thus $k \gg 1$.
\end{definition}


\section{The Receiver-Connected Point Process}\label{Sec:RxConnectedPro}
During the allowed $\tau$ transmission attempts, the intended receivers in $\mathcal{R}_0$ connected by transmitter $X_0$ form a receiver-connected point process whose intensity is the necessary information to estimate the multicast outage probability. Since $\Phi_c$ is a filtration process and upper bounded by $\Phi_r$ (as explained in Section \ref{Sec:MultiTranModelRxConnProc}), the connected receiver intensity in $\mathcal{R}_0$ is an increasing function of $\tau$ as shown in the following lemma.
\begin{lemma}\label{Lem:NonCoopConnRxIntenNakaFading}
Consider the stationary PPP in the reference cluster $\mathcal{R}_0$. If a transmitter is allowed to transmit a packet up to $\tau$ times and all channels are Nakagami-$m$ fading, then $\Phi_c$ is a nonhomogeneous thinning PPP and its intensity at $r\in[1,s]$ is lower bounded as follows.
\begin{eqnarray}\label{Eqn:NonCoopConnIntenNakaFading}
\lambda_{c}(r,\tau) &\geq& \lambda_r\left\{1-\left[1-(\beta\,r^{\alpha})\Psi^{(m-1)}(\beta\, r^{\alpha})\right]^{\tau}\right\},\\
\Psi^{(m)}(\phi) &\defn& \frac{(-\phi)^{m}}{m!}\,\frac{\dif^{m}}{\dif \label{Eqn:PsiMphi} \phi^{m}} \left(e^{-\mu_u\,\lambda_t\,\Delta_1(\phi,\infty)}/\phi\right),
\end{eqnarray}
\end{lemma}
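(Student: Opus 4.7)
My overall strategy is to establish the PPP structure by independent thinning, derive the per-attempt success probability via the Laplace transform of the aggregate interference, and then combine the $\tau$ attempts into a single union event.

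First, after conditioning on the typical transmitter $X_0$ at the origin, Slivnyak's theorem leaves both the intended-receiver PPP $\Phi_r$ on $\mathcal{R}_0$ and the interferer PPP $\Phi_t\setminus\{X_0\}$ unchanged, and the two processes remain independent. For each $Y_j\in\Phi_r$ and attempt $i\leq\tau$, the connection indicator $\mathbf{1}\{H_{j_i}\geq\beta\|Y_j\|^\alpha I_t\}$ is an independent mark attached to the point $Y_j$: the signal fades $\{H_{j_i}\}$ are i.i.d.\ across $j$ and $i$, and $I_t$ depends only on $\Phi_t$, independent of $\Phi_r$. The independent marking/thinning theorem for PPPs then yields that each $\hat{\Phi}_{c_i}$ is a nonhomogeneous PPP with intensity $\lambda_{c_i}(r)=\lambda_r\,p_s(r)$, where $p_s(r)\defn\mathbb{P}[H\geq\beta r^\alpha I_t]$. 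A union of $\tau$ such independent thinnings of a common PPP is itself a thinning of that PPP (with retention probability equal to one minus the product of per-attempt rejection probabilities), so $\Phi_c=\bigcup_{i=1}^{\tau}\hat{\Phi}_{c_i}$ is a nonhomogeneous thinning PPP of $\Phi_r$, establishing the first assertion.

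Second, I would evaluate $p_s(r)$ in closed form for Nakagami-$m$ by exploiting the Gamma tail $\mathbb{P}[H\geq x]=e^{-x}\sum_{k=0}^{m-1}x^k/k!$. Taking expectation over $I_t$ and using $\mathbb{E}[I_t^k e^{-\phi I_t}]=(-1)^k\mathcal{L}_{I_t}^{(k)}(\phi)$ with $\phi=\beta r^\alpha$ gives
\begin{equation*}
p_s(r)=\sum_{k=0}^{m-1}\frac{(-\phi)^k}{k!}\mathcal{L}_{I_t}^{(k)}(\phi).
\end{equation*}
The Laplace transform $\mathcal{L}_{I_t}$ itself comes from the PGFL applied to $\Phi_t$ with the hard-cutoff path loss \eqref{Eqn:PathLossModel} and takes the advertised form $\exp(-\mu_u\lambda_t\Delta_1(\phi,\infty))$, where $\Delta_1$ absorbs the angular integration together with the Nakagami Laplace transform of $\tilde{H}$. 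The key algebraic identity
\begin{equation*}
\sum_{k=0}^{m-1}\frac{(-\phi)^k}{k!}\mathcal{L}^{(k)}(\phi)=\phi\cdot\frac{(-\phi)^{m-1}}{(m-1)!}\frac{d^{m-1}}{d\phi^{m-1}}\!\left(\frac{\mathcal{L}(\phi)}{\phi}\right)
\end{equation*}
can then be verified by Leibniz's rule combined with $(\phi^{-1})^{(j)}=(-1)^j j!\phi^{-j-1}$; the binomial coefficients collapse and one recovers the left-hand sum. This yields $p_s(r)=(\beta r^\alpha)\Psi^{(m-1)}(\beta r^\alpha)$.

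Third, a receiver at distance $r$ belongs to $\Phi_c$ iff it connects in at least one of the $\tau$ attempts. Under the slotted-ALOHA assumption that the interferer set active in each slot is re-drawn independently, the $\tau$ per-attempt connection events are mutually independent, giving $\mathbb{P}[Y_j\in\Phi_c\mid\|Y_j\|=r]=1-(1-p_s(r))^\tau$, hence $\lambda_c(r,\tau)=\lambda_r[1-(1-p_s(r))^\tau]$; the conclusion is stated with ``$\geq$'' because any lower-bound replacement of $p_s(r)$ in the closed form propagates through by the monotonicity of $x\mapsto 1-(1-x)^\tau$. The main obstacle I expect is the Leibniz-rule collapse in Step 2 that repackages the derivative sum as the single compact object $\phi\Psi^{(m-1)}(\phi)$; a secondary subtlety is justifying the direction of the inequality in Step 3, since the across-slot independence hypothesis and the precise way $\mathcal{L}_{I_t}$ is bounded must be chosen consistently so that the final statement is genuinely a lower bound rather than an upper one.
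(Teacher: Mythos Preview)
Your proposal follows essentially the same route as the paper. Both arguments (i) establish that $\Phi_c$ is a thinned PPP of $\Phi_r$ (the paper by computing the Laplace functional of $\Phi_c$ explicitly, you by invoking the independent-thinning theorem; these are equivalent presentations of the same fact), (ii) express the per-attempt success probability through derivatives of $\mathcal{L}_{I_t}$, and (iii) combine the $\tau$ attempts multiplicatively. The paper's incomplete-Gamma identity
\[
\int_0^\infty \Gamma(m,aw)\,f_W(w)\,\dif w=(-a)^m\,\frac{\dif^{\,m-1}}{\dif a^{m-1}}\!\left(\frac{\mathcal{L}_W(a)}{-a}\right)
\]
is exactly your Leibniz collapse written in integral form, so your Step~2 is the same calculation in different notation.

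The one substantive difference is how the $\tau$ attempts are decoupled. You posit that the active interferer set is redrawn independently in each slot, which makes the retention probability $1-(1-p_s(r))^{\tau}$ exact. The paper instead keeps a common interference field and writes
\[
\mathbb{P}\Bigl[\max_i H_{j_i}\geq \beta r^{\alpha} I_t\Bigr]\stackrel{(\star)}{=}1-\bigl(\mathbb{E}_I[F_H(\beta r^{\alpha} I_t\mid I_t)]\bigr)^{\tau},
\]
justifying $(\star)$ by the remark that the temporal correlation of interference can be neglected for small $\lambda_t$ (with a citation). So the paper does \emph{not} assume across-slot independence but argues the resulting error is negligible in the regime of interest; your assumption is a stronger modeling choice that yields the same formula cleanly. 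Your closing concern about the direction of the inequality is well placed: the paper's own derivation is loose at exactly this point, and the ``$\geq$'' ultimately rests on the same approximation you flagged rather than on a clean monotone bound.
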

where $\Delta_1$ is defined in the following lemma for moment generating functional of stationary independent PPPs.
\begin{proof}
See Part \ref{App:ProofNonCoopConnRxInten} in the Appendix.
\end{proof}

\begin{lemma}\label{Lem:LapalceShotPPP}
Let $\Phi_i=\{(X_{i_j},H_{i_j}):X_{i_j}\in\mathcal{B}(0,r)\cap\mathbb{R}^d, r\geq 1,j\in\mathbb{N}\}$ be a stationary marked PPP of intensity $\lambda_i$ for all $i\in[1,2,\cdots,\ell]$ and $\{H_{i_j}\}$ are i.i.d. Nakagami-$m$ random variables with unit mean and variance. Suppose $\Phi_i$ has a Poisson shot generating function $I_i:\mathbb{R}_+^d\times \mathbb{R}_+\rightarrow \mathbb{R}_+$ which is defined as $I_i\defn \sum_{X_i\in\Phi_i} H_{i_j}\|X_{i_j}\|^{-\alpha}$ where $\alpha>d$. If $\{\Phi_i\}$ are independent and $\xi=\frac{d}{\alpha}$, then the sum of the Poisson shot generating functions, \ie $I=\sum_{i=1}^{\ell}I_i$, has the following moment generating functional for $\lambda=\sum_{i=1}^{\ell}\lambda_i$, $\phi_1\in\mathbb{R}_{++}$ and $\phi_2\in\left(0,m\,r^{\alpha}\right)$:
\begin{eqnarray}
\mathcal{L}_I(\phi_1)&=&\mathbb{E}\left[e^{-\phi_1 I}\right]= \exp\left(-\mu_u\,\Delta_1(\phi_1,r)\,\lambda\right), \label{LaplaceShotPPP1}\\
\mathcal{M}_I(\phi_2)&=&\mathbb{E}\left[e^{\phi_2 I}\right]= \exp\left(\mu_u\,\Delta_2(\phi_2,r)\,\lambda\right), \label{LaplaceShotPPP2}
\end{eqnarray}
where
\begin{eqnarray*}
\Delta_1(\phi_1,r)&\defn& \xi\left(\frac{\phi_1}{m}\right)^{\xi}\sum_{j=0}^{m-1} {m\choose j}\int_{m/\phi_1}^{m r^{\alpha}/\phi_1} \frac{t^{j+\xi-1}}{(1+t)^m}\dif t,\\
\Delta_2(\phi_2,r) &\defn& \xi\left(\frac{\phi_2}{m}\right)^{\xi}\sum_{j=0}^{m-1} {m\choose j} \int_{m/\phi_2}^{m r^{\alpha}/\phi_2}\frac{(-1)^j t^{j+\xi}}{(1-t)^m}\,\dif t.
\end{eqnarray*}
\end{lemma}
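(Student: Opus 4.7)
The plan is to combine three standard ingredients: the superposition theorem for independent PPPs, the probability generating functional (PGFL), and the Laplace/moment generating function of the Nakagami-$m$ fading power. Because $\Phi_1,\ldots,\Phi_\ell$ are mutually independent stationary marked PPPs with i.i.d.\ marks drawn from the same law, their superposition is itself a stationary marked PPP on $\mathcal{B}(0,r)\cap\mathbb{R}^d$ with intensity $\lambda=\sum_{i=1}^{\ell}\lambda_i$, so $I=\sum_i I_i$ is the shot noise generated by a single superposed process. Factorising $e^{\mp\phi I}=\prod_{X_j}e^{\mp\phi H_j\|X_j\|^{-\alpha}}$ and invoking the PGFL
\begin{equation*}
\mathbb{E}\!\left[\prod_{X_j\in\Phi}g(X_j)\right]=\exp\!\left(-\lambda\int_{\mathcal{B}(0,r)}\bigl(1-g(x)\bigr)\,\dif x\right)
\end{equation*}
reduces each transform to a single deterministic integral once the marks are averaged out.

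Averaging $e^{-\phi_1 H\|x\|^{-\alpha}}$ and $e^{\phi_2 H\|x\|^{-\alpha}}$ against the Gamma$(m,1/m)$ density of $H$ yields $(1+\phi_1\|x\|^{-\alpha}/m)^{-m}$ and $(1-\phi_2\|x\|^{-\alpha}/m)^{-m}$ respectively. The truncated path-loss kernel \eqref{Eqn:PathLossModel} forces the PGFL integrand to vanish on $\|x\|<1$, collapsing the domain to the annulus $1\leq\|x\|\leq r$. Passing to spherical coordinates with $\dif x=d\mu_u u^{d-1}\dif u$ and applying the substitution $t=mu^{\alpha}/\phi$ maps $[1,r]$ onto $[m/\phi,mr^{\alpha}/\phi]$ with $u^{d-1}\dif u=(\xi/d)(\phi/m)^{\xi}t^{\xi-1}\dif t$. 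Expanding the kernel via the binomial identity $(t+1)^m-t^m=\sum_{j=0}^{m-1}\binom{m}{j}t^j$ in the Laplace case and the companion expansion of $t^m-(t-1)^m$ in the MGF case regroups the integrand into exactly the sums defining $\Delta_1$ and $\Delta_2$, and feeding the result back into the PGFL produces \eqref{LaplaceShotPPP1} and \eqref{LaplaceShotPPP2}.

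The main bookkeeping obstacle lies in the MGF branch: $\mathbb{E}_H[e^{\phi_2 H\|x\|^{-\alpha}}]$ is finite only when $\phi_2\|x\|^{-\alpha}<m$, so the stated range $\phi_2\in(0,mr^{\alpha})$ must be used to keep both the Gamma MGF and the radial integral convergent on the annulus, and the alternating signs coming from expanding $(t-1)^m$ need to be tracked carefully to recover the $(-1)^j$ pattern of $\Delta_2$. Everything else is a routine combination of Campbell's theorem, the superposition property, and elementary changes of variable.
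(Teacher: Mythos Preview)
The paper does not actually prove this lemma in the text; the proof is omitted for space with a pointer to \cite{CHLJGA10}, so there is nothing here to compare against. Your plan is the standard (and essentially the only) route: superpose the independent marked PPPs into a single one of intensity $\lambda$, apply the PGFL after averaging over the Gamma$(m,1/m)$ marks to obtain the kernel $(1\pm\phi\|x\|^{-\alpha}/m)^{-m}$, restrict to the annulus $1\le\|x\|\le r$ via the truncated path-loss \eqref{Eqn:PathLossModel}, pass to polar coordinates, substitute $t=mu^{\alpha}/\phi$, and binomially expand $(t+1)^m-t^m$ (respectively $t^m-(t-1)^m$) to recover the sums defining $\Delta_1$ and $\Delta_2$.

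One small correction in your bookkeeping paragraph: for the MGF branch the finiteness of $\mathbb{E}_H[e^{\phi_2 H u^{-\alpha}}]=(1-\phi_2 u^{-\alpha}/m)^{-m}$ on the whole annulus requires $\phi_2 u^{-\alpha}<m$ at the \emph{inner} radius $u=1$, i.e.\ $\phi_2<m$. The printed range $\phi_2\in(0,mr^{\alpha})$ is therefore too permissive and looks like a misprint in the statement rather than a flaw in your argument; you should impose $\phi_2<m$ when you carry out the details.
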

\begin{proof}
Omitted due to space. See \cite{CHLJGA10} if interested.
\end{proof}

\section{Multicast Transmission Capacity}\label{Sec:NonCoopMTC}
In this section, we study the MTC when transmitters are multicasting to all of their intended receivers in a single-hop fashion. Using the receiver-connected intensity found in Section \ref{Sec:RxConnectedPro} and the multicast outage probability defined in \eqref{Eqn:MultiOutProb2}, the maximum contention intensity is found as follows.

\begin{theorem}\label{Thm:MaxContenIntenNonCoop}
Suppose \eqref{Eqn:MultiOutProb2} is upper bounded by $\epsilon$ and the maximum decoding delay is $\tau$ transmission attempts. If $k\geq \frac{1}{\epsilon^{\tau-1}}$, then the maximum contention intensity is
\begin{equation} \label{Eqn:MaxContenIntenNoCoop}
\bar{\lambda}_{\epsilon}=\Theta\left(\frac{\rho\,\tau^2}{\beta^{\xi}\,\mu_r\,\sqrt[\tau]{k}}\right),
\end{equation}
where $\rho=\frac{1}{\tau^2}\sqrt[\tau]{\epsilon(\tau+1)}$.
\end{theorem}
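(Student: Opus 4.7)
The plan is to turn the outage inequality~\eqref{Eqn:MultiOutProb2} into an explicit bound on $\lambda_t$ and invert it to recover $\bar{\lambda}_\epsilon$. First I would use $-\ln(1-\epsilon)=\epsilon+O(\epsilon^2)$ to recast \eqref{Eqn:MultiOutProb2} as $\mu_r(\lambda_r-\mathbb{E}_R[\lambda_c(R,\tau)])\le\epsilon+O(\epsilon^2)$ and then plug in Lemma~\ref{Lem:NonCoopConnRxIntenNakaFading}. Writing the one-shot success probability as $p(r)\defn(\beta r^\alpha)\Psi^{(m-1)}(\beta r^\alpha)$ and using $\mu_r\lambda_r=k$, the constraint tightens to the sufficient form
\[
k\,\mathbb{E}_R\!\bigl[(1-p(R))^\tau\bigr]\le\epsilon+O(\epsilon^2).
\]

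Next I would extract the $\lambda_t$-dependence of $p(r)$ from Lemma~\ref{Lem:LapalceShotPPP}: for Rayleigh fading ($m=1$) the Laplace functional gives $p(r)=\exp(-\mu_u\lambda_t\Delta_1(\beta r^\alpha,\infty))$, and for general Nakagami-$m$ the differentiation defining $\Psi^{(m-1)}$ produces the same leading behavior $1-p(r)=\mu_u\lambda_t\,g_m(\beta r^\alpha)+O((\mu_u\lambda_t)^2)$, with $g_m(\phi)\sim C_m\phi^\xi$ as $\phi\to\infty$ (the constant $C_m$ absorbing the incomplete-beta integrals inside $\Delta_1$). Combining this with the PDF $f_R(r)=d r^{d-1}/s^d$ yields $\mathbb{E}_R[R^{d\tau}]=s^{d\tau}/(\tau+1)$, and since $\mu_u s^d=\mu_r$ the constraint collapses to
\[
\frac{k(\mu_r\lambda_t)^\tau C_m^\tau\beta^{\xi\tau}}{\tau+1}\le\epsilon.
\]
Solving gives $\lambda_t\le\tfrac{1}{C_m\beta^\xi\mu_r}\bigl(\epsilon(\tau+1)/k\bigr)^{1/\tau}$, which is precisely the claimed $\Theta$ scaling after writing $\sqrt[\tau]{\epsilon(\tau+1)}=\tau^2\rho$ and absorbing $C_m$ into the implied constant.

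The main obstacle is twofold. First, to land a $\Theta$ rather than merely an $O$ bound on $\bar{\lambda}_\epsilon$, the lower bound in Lemma~\ref{Lem:NonCoopConnRxIntenNakaFading} has to be shown tight to leading order, which in turn requires the $\tau$ retransmissions to be nearly independent in the small-$\lambda_t$ regime and the near-field truncation in~\eqref{Eqn:PathLossModel} to contribute only higher-order corrections. Second, the hypothesis $k\ge 1/\epsilon^{\tau-1}$ has to be invoked explicitly to justify the approximations: it forces $\bar{\lambda}_\epsilon=\Theta(\epsilon^{1/\tau}/k^{1/\tau})\lesssim\epsilon$, small enough that the $O((\mu_u\lambda_t)^2)$ error in the linearization of $1-p(r)$ and the $O(\epsilon^2)$ error from linearizing the logarithm can both be absorbed into the $\Theta$ constant. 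The rest is asymptotic bookkeeping.
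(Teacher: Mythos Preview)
Your proposal is correct and follows essentially the same route as the paper: linearize the outage constraint, insert Lemma~\ref{Lem:NonCoopConnRxIntenNakaFading}, expand $1-p(r)$ to first order in $\lambda_t$, compute $\mathbb{E}_R[R^{d\tau}]=s^{d\tau}/(\tau+1)$ to pick up the $(\tau+1)$ factor, and invert. The only organizational difference is that the paper first applies a H\"{o}lder/Jensen step to the constraint $\mathbb{E}_R[(1-p(R))^\tau]\le\epsilon/k$ to conclude $\mathbb{E}_R[p(R)]\ge 1-\sqrt[\tau]{\epsilon/k}$ and hence (via $k\ge 1/\epsilon^{\tau-1}$) that $\lambda_t=\Theta(\epsilon)$ \emph{before} linearizing, whereas you linearize first and verify self-consistency afterward; both orderings are valid and lead to the same result.
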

\begin{proof}
See Part \ref{App:ProofMaxContenIntenNonCoop} in the Appendix.
\end{proof}

\begin{remark}
The scaling function $\Theta(\cdot)$ of $\bar{\lambda}_{\epsilon}$ in \eqref{Eqn:MaxContenIntenNoCoop} only contains the ``controllable'' network parameters such as $\mu_r$, $\lambda_r$, $\epsilon$ and $\beta$, which means their values are adjustable if needed. The parameter $m$ of Nakagami fading, is a channel characteristic and usually uncontrollable and thus not involved in the scaling result in Theorem \ref{Thm:MaxContenIntenNonCoop}. Hence, the scaling of the maximum contention intensity is unaltered no matter whether the channels in the network are fading or not.
\end{remark}

If a unicast planar network without retransmission is considered (\ie $d=2$ and $k=\tau=1$), $\bar{\lambda}_{\epsilon}$ in \eqref{Eqn:MaxContenIntenNoCoop} will reduce to the previous results discovered, \ie $\bar{\lambda}_{\epsilon}=\Theta\left(\frac{\epsilon}{s^2\,\beta^{2/\alpha}}\right)$. In \cite{SWXYJGAGDV05}, for example, the maximum contention intensities of FH-CDMA and DS-CDMA are $\Theta\left(\frac{\epsilon M}{s^2\beta^{2/\alpha}}\right)$ and $\Theta\left(\frac{\epsilon M^{2/\alpha}}{s^2\beta^{2/\alpha}}\right)$ respectively, where $M$ is the channel number of FH-CDMA and the spreading factor of DS-CDMA. It is easy to check that these two results coincide with ours here by considering $\frac{\bar{\lambda}_{\epsilon}}{M}$ for FH-CDMA and $\frac{\beta}{M}$ for DS-CDMA. In addition, the longest transmission distance in a cluster is $s$ and we know $\bar{\lambda}_{\epsilon}=\Theta(s^{-2})$ and so is the network capacity, which also coincides with the results in \cite{SWXYJGAGDV05}\cite{FBBBPM06}.

Considering a capacity-approaching code is used, the maximum achievable multicast rate $b$ that is acceptable for all intended receivers is the following Ergodic channel capacity evaluated at the boundary of a cluster:
\begin{equation}\label{Eqn:MaxBcRateWoTimeDiv}
b = \mathbb{E}\left[\log \left(1+\frac{H_{\max}\,s^{-\alpha}}{I_t}\right)\right].
\end{equation}
where $H_{\max}= \max_{i\in[1,\cdots,\tau]}H_{j_i}$.
The bounds on $b$ are given in the following lemma.
\begin{lemma}\label{Lem:NonCoopBoundsMCrate}
There exists a $\delta\in(0,1)$ such that the bounds on the multicast rate $b$ in \eqref{Eqn:MaxBcRateWoTimeDiv} can be given by
\begin{equation}\label{Eqn:BoundsNonCoopBCRate}
\delta\,\log\left(1+\frac{1}{\mu_r\,\lambda_t}\right) \leq b\leq\log\left(1+\frac{1}{\mu_r\lambda_t}\right)+O(1).
\end{equation}
\end{lemma}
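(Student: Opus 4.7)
The plan is to prove both bounds through Jensen's inequality applied in opposite directions, using the Laplace transform results of Lemma 2 to control the interference moments.

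For the upper bound, concavity of $\log(1+\cdot)$ together with the independence of $H_{\max}$ from $I_t$ gives $b \leq \log\bigl(1 + s^{-\alpha}\,\mathbb{E}[H_{\max}]\,\mathbb{E}[1/I_t]\bigr)$. The quantity $\mathbb{E}[H_{\max}]$ is a finite constant depending only on $\tau$ and $m$. For $\mathbb{E}[1/I_t]$, I would use the identity $1/I_t = \int_0^{\infty} e^{-uI_t}\,du$ to obtain $\mathbb{E}[1/I_t] = \int_0^{\infty}\mathcal{L}_{I_t}(u)\,du$, and then substitute Lemma 2 with $r = \infty$. Splitting this integral into a small-$u$ regime (where $\Delta_1(u,\infty)$ is linear in $u$, thanks to the bounded near-field path loss in \eqref{Eqn:PathLossModel}) and a large-$u$ regime (where $\Delta_1(u,\infty)\sim u^{\xi}$), and combining with $s^{-\alpha} \leq s^{-d} = \mu_u/\mu_r$ (valid since $\alpha > d$ and $s\geq 1$), should yield $s^{-\alpha}\,\mathbb{E}[1/I_t] = O\bigl(1/(\mu_r\lambda_t)\bigr)$. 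Pulling the multiplicative constant outside the logarithm then gives the additive $O(1)$ correction.

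For the lower bound, the key observation is that $g(x) = \log(1 + c/x)$ is convex in $x > 0$ for every $c>0$. Conditioning on $H_{\max}$ (which is independent of $I_t$) and applying Jensen in the convex direction gives $b \geq \mathbb{E}_{H_{\max}}\!\bigl[\log(1 + H_{\max}s^{-\alpha}/\mathbb{E}[I_t])\bigr]$. A direct first-moment calculation using the Campbell formula on $\Phi_t$ together with the path loss in \eqref{Eqn:PathLossModel} yields $\mathbb{E}[I_t] = \lambda_t d\mu_u/(\alpha-d)$. Since each fading gain is Nakagami-$m$ with unit mean, one can fix constants $h_0 > 0$ and $p_0 \defn \mathbb{P}(H_{\max} \geq h_0) > 0$, obtaining $b \geq p_0 \log\bigl(1 + c_0 s^{-\alpha}/\lambda_t\bigr)$ for a suitable $c_0>0$. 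Comparing this to $\log(1 + 1/(\mu_r\lambda_t))$ and noting that both diverge at comparable logarithmic rates in the regime $\mu_r\lambda_t \ll 1$, the ratio stays bounded away from zero, so the product of $p_0$ and this bounded ratio collapses into a single constant $\delta\in(0,1)$.

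The main obstacle is reconciling the signal decay scale $s^{-\alpha}$ with the normalized interference scale $1/(\mu_r\lambda_t) = 1/(\mu_u s^d\lambda_t)$. With $\alpha > d$, the signal at the cluster edge is strictly weaker than $\mu_u/\mu_r$, so a naive Jensen step naturally produces a bound of order $\log(1 + 1/(\mu_r\lambda_t)^{\alpha/d})$ rather than $\log(1 + 1/(\mu_r\lambda_t))$. Both the additive $O(1)$ in the upper bound and the multiplicative $\delta\in(0,1)$ in the lower bound are chosen precisely to absorb this $\alpha/d$ mismatch, together with any constants arising from the near-field truncation. Establishing that $\delta$ is strictly positive and uniform in $\mu_r$ and $\lambda_t$ over the scaling regime of interest is the most delicate step and forms the technical core of the argument.
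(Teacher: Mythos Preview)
The paper itself omits this proof entirely (it refers to \cite{CHLJGA10}), so there is no in-paper argument to compare against. I can therefore only assess the internal soundness of your plan.

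Your upper-bound argument has a real gap. Jensen plus $\mathbb{E}[1/I_t]=\int_0^\infty \mathcal{L}_{I_t}(u)\,\dif u$ is fine, but the conclusion you draw from it is not. Using Lemma~\ref{Lem:LapalceShotPPP}, for large $u$ one has $\Delta_1(u,\infty)\sim c\,u^{\xi}$ with $\xi=d/\alpha$, so $\int_0^\infty \exp(-\mu_u\lambda_t\Delta_1(u,\infty))\,\dif u$ scales like $\lambda_t^{-1/\xi}=\lambda_t^{-\alpha/d}$, not like $\lambda_t^{-1}$. Hence
\[
s^{-\alpha}\,\mathbb{E}[1/I_t]\;=\;\Theta\bigl((s^{d}\lambda_t)^{-\alpha/d}\bigr)\;=\;\Theta\bigl((\mu_r\lambda_t)^{-\alpha/d}\bigr),
\]
and the Jensen step yields
\[
b\;\le\;\log\!\left(1+\Theta\bigl((\mu_r\lambda_t)^{-\alpha/d}\bigr)\right)\;=\;\frac{\alpha}{d}\,\log\!\left(1+\frac{1}{\mu_r\lambda_t}\right)+O(1).
\]
Because $\alpha/d>1$, the excess $(\alpha/d-1)\log(1+1/(\mu_r\lambda_t))$ diverges in the regime $\mu_r\lambda_t\to 0$; an additive $O(1)$ term cannot absorb it. Your own last paragraph identifies exactly this $\alpha/d$ mismatch, but the claim that the $O(1)$ swallows it is incorrect. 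To get the coefficient $1$ in front of the logarithm you need a different decomposition---for instance, first splitting $\log(1+\mathrm{SIR})$ via $\log(1+xy)\le\log(1+x)+\log(1+y)$ with $x=1/(\mu_r\lambda_t)$ and $y=H_{\max}\mu_r\lambda_t s^{-\alpha}/I_t$, and only then controlling the residual term.

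The lower-bound plan is closer to working, since a multiplicative $\delta<1$ can genuinely absorb a constant ratio between two logarithms. But the assertion that ``both diverge at comparable logarithmic rates'' is not automatic: with $\mathbb{E}[I_t]=\Theta(\lambda_t)$, your Jensen lower bound involves $s^{-\alpha}/\lambda_t$, whereas the target involves $s^{-d}/\lambda_t$. In a large network with $\bar{\lambda}_\epsilon=\Theta(s^{-d(1+1/\tau)})$, one gets $s^{-\alpha}/\lambda_t=\Theta(s^{d(1+1/\tau)-\alpha})$, which tends to $0$ whenever $\alpha>d(1+1/\tau)$; in that case your lower bound collapses while $\log(1+1/(\mu_r\lambda_t))$ still diverges. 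You would need to replace $\mathbb{E}[I_t]$ by a quantile or a truncated moment (the heavy tail of $I_t$ makes its mean a poor proxy) to obtain a $\delta$ that is uniform across the scaling regimes of Definition~\ref{Def:DenseLargeNetwork}.
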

\begin{proof}
Omitted due to space. See \cite{CHLJGA10} if interested.
\end{proof}

\begin{remark}
The bounds on $b$ in \eqref{Eqn:BoundsNonCoopBCRate} are not affected by channel fading since the fading effect has been averaged out, and they indicate $b$ is significantly dominated by the aggregate interference from the transmitters in the cluster.
\end{remark}

\textbf{Scaling Law of MTC}. According to Theorem \ref{Thm:MaxContenIntenNonCoop} and Lemma \ref{Lem:NonCoopBoundsMCrate}, we found that the multicast rate $b$ is $\Theta(\log(k)/\tau)$ for any network conditions if $\bar{\lambda}_{\epsilon}$ is achieved and $k$ is sufficiently large. The MTCs in a network without receiver cooperation can be concluded as follows. (\textbf{i}) For a dense network, $\bar{\lambda}_{\epsilon}=\Theta\left(\frac{\rho\tau^2}{\sqrt[\tau]{k}}\right)$ since $\mu_r$ is fixed and $k\gg 1$. By the MTC definition, we know $C_{\epsilon}=\Theta\left(\frac{\rho\log(k)}{\sqrt[\tau]{k}}\right)$. (\textbf{ii}) If the network is large, then $\bar{\lambda}_{\epsilon}=\Theta\left(\frac{\rho\tau^2}{k^{1+1/\tau}}\right)$ and thus $C_{\epsilon}=\Theta\left(\frac{\rho\log(k)}{k^{(1+1/\tau)}}\right)$. (\textbf{iii}) For a large dense network, $\lambda_r=\Theta(\mu_r)$ and $k\gg 1$. So $\bar{\lambda}_{\epsilon}$ is $\Theta\left(\rho\,\tau^2\,k^{-\left(\frac{\tau+2}{2\tau}\right)}\right)$ and thus $C_{\epsilon}=\Theta\left(\rho k^{-\left(\frac{\tau+2}{2\tau}\right)}\log(k)\right)$. In summary, the MTC here can be expressed in a general form as follows:
\begin{equation}\label{Eqn:GeneralMTC}
 C_{\epsilon} = \Theta\left(\rho\,k^{x}\log(k)\right),
\end{equation}
where $x$ has been given in Table \ref{Tab:MainResultsMTC}.


\appendix
\subsection{Proof of Lemma \ref{Lem:NonCoopConnRxIntenNakaFading}}\label{App:ProofNonCoopConnRxInten}
The Laplace functional of the stationary PPP $\Phi$ for a nonnegative function $g:\mathbb{R}^d\rightarrow\mathbb{R}_+$ is defined and shown as follows \cite{FBBB10}:
\begin{eqnarray*}
 \mathcal{\tilde{L}}_{\Phi}(g)&\defn& \mathbb{E}\left[e^{-\int_{\mathbb{R}^d}g(X)\,\Phi(\dif X)}\right]\\
 &=&\exp\left(-\int_{\mathbb{R}^d}\left(1-e^{-g(X)}\right)\lambda_r\,\mu(\dif X)\right).
\end{eqnarray*}
Since the Laplace functional completely characterizes the distribution of the point process, we can find the intensity of $\Phi_c$ by looking for $\mathcal{\tilde{L}}_{\Phi_c}(g)$. Recall that $\Phi_c=\left\{Y_j\in\Phi_r: \max_{i\in [1,\cdots,\tau]} H_{j_i} \geq \beta\|Y_j\|^{\alpha}I_t\right\}$ and $\{H_{j_i}\}$ are i.i.d. $\forall i\in [1,2,\cdots,\tau]$. Let $\mathds{1}_A(x)$ be an indicator function which is equal to 1 if $x\in A$ and 0, otherwise. The Laplace functional of $\Phi_c$ for $g(Y)=\tilde{g}(Y)\mathds{1}_{\Phi_c}(Y)$ is given by \eqref{Eqn:LapFunG}.
\begin{eqnarray}
\mathcal{\tilde{L}}_{\Phi_c}(g) = e^{-k} \sum_{i=0}^{\infty} \frac{\lambda^i_r}{i!}\int_{\mathcal{R}_0}\cdots\int_{\mathcal{R}_0}
\prod_{j=1}^i \bigg[1-\mathbb{E}[\mathds{1}_{\Phi_c}(Y_j)]\nonumber\\
\left(1-e^{-g(Y)}\right)\bigg]\mu(\dif Y_1)\cdots \mu(\dif Y_i),\nonumber\\
= \exp\left(\int_{\mathcal{R}_0}\left(e^{-g(Y)}-1\right)\mathbb{E}[\mathds{1}_{\Phi_c}(Y)]\lambda_r\,\mu(\dif Y)\right).\label{Eqn:LapFunG}
\end{eqnarray}
Also, for all $r\in[1,s]$ we know
\begin{eqnarray*}
\mathbb{E}[\mathds{1}_{\Phi_c}(Y)]&=&\mathbb{P}\left[\max_{i=1,\ldots,\tau}H_{j_i}\geq \beta \|Y\|^{\alpha}I_t\right]\\
&\stackrel{(\star)}{=}& 1-\left(\mathbb{E}_{I}[F_H(\beta \, r^{\alpha}\,I_t|I_t)]\right)^{\tau},\label{Eqn:ConnProbNtrans1}
\end{eqnarray*}
where $(\star)$ follows from the fact that the temporal correlation of interference can be neglected for small $\lambda_t$\cite{RKGMH09}.
So we have
\begin{equation*}
\mathcal{\tilde{L}}_{\Phi_c}(g)= \exp\left(-\int_1^s \left(1-e^{-g(r)}\right)\lambda_c(r,\tau)\,\mu(\dif r)\right),
\end{equation*}
where $\lambda_c(r,\tau)=\lambda_r(1-\left(\mathbb{E}_{I}[F_H(\beta \, r^{\alpha}\,(I_t+N_0)|I_t)]\right)^{\tau})$. From the above result we know that $\Phi_c\subseteq \Phi_r$ is a nonhomogeneous PPP because its intensity $\lambda_c(r,\tau)$ is the intensity $\lambda_r$ of $\Phi_r$ scaled by \eqref{Eqn:ConnProbNtrans1}.
Since the channel gain $H$ is a Nakagami-\emph{m} random variable, we know that
\begin{eqnarray*}
\mathbb{E}_{I}\left[F_{H}(\beta r^{\alpha}I_t)|I_t)\right] &=& \int_{\mathbb{R}_+} F_H(\beta r^{\alpha}\omega)f_{I_t}(\omega)\, \dif \omega \nonumber\\ &\geq& 1-\int_{0}^{\infty}\frac{\Gamma(m,\beta r^{\alpha}\omega)}{\Gamma(m)}f_{I_t}(\omega)\,\dif \omega,
\end{eqnarray*}
where $\Gamma(m,x)=\int_{x}^{\infty} t^{m-1}e^{-t}\,\dif t$ and $\Gamma(m)=\Gamma(m,0)=(m-1)!$. Also, if $f_W(w)$ is a probability density function of random variable $W$ then we can have the following result.
\begin{eqnarray}
\int_{0}^{\infty} \Gamma(m,a w)\,f_W(w)\,\dif w = (-a)^m\frac{\dif ^{m-1}}{\dif a^{m-1}}\left(\frac{\mathcal{L}_{W}(a)}{-a}\right).
\end{eqnarray}
According to Lemma \ref{Lem:LapalceShotPPP}, we can have $\mathcal{L}_{I_t}(\phi)$ with $\Delta_1(\phi,\infty)$. Thus,
\begin{eqnarray}\label{Eqn:SuccProbNakaFading1}
&\mathbb{E}_{I}\left[F_{H}(\beta r^{\alpha}I_t|I_t))\right] \geq 1+\frac{(-\phi)^m}{\Gamma(m)}\frac{\dif^{m-1}}{\dif \phi^{m-1}}\left(\frac{\mathcal{L}_{I_t}(\phi)}{\phi}\right)\bigg|_{\phi=\beta r^{\alpha}}\nonumber \\
&\hspace{0.9in} = 1-(\beta r^{\alpha})\Psi^{(m-1)}(\beta r^{\alpha}).
\end{eqnarray}
Substituting \eqref{Eqn:SuccProbNakaFading1} into $\lambda_c(r,\tau)$, then \eqref{Eqn:NonCoopConnIntenNakaFading} can be arrived.

\subsection{Proof of Theorem \ref{Thm:MaxContenIntenNonCoop}}\label{App:ProofMaxContenIntenNonCoop}
According to the outage probability \eqref{Eqn:MultiOutProb2} upper bounded by $\epsilon$, we know
$$\mathbb{E}_{R}[\lambda_c(R,\tau)]\geq \lambda_r+\frac{\log(1-\epsilon)}{\mu_r}=\lambda_r\left(1-\frac{\epsilon}{k}\right)+\Theta(\epsilon^2),$$
for sufficiently small $\epsilon$. Calculating the lower bound of $\mathbb{E}_{R}[\lambda_c(R,\tau)]$ from \eqref{Eqn:NonCoopConnIntenNakaFading} and then using the H\"{o}lder inequality, we can show that
\begin{equation}\label{Eqn:AvgConnectProb}
\mathbb{E}_R[(\beta R^{\alpha})\Psi^{(m-1)}(\beta R^{\alpha})]\geq 1-\sqrt[\tau]{\epsilon/k}.
\end{equation}

Note that $(\beta\,R^{\alpha})\Psi^{(m-1)}(\beta\,R^{\alpha})\in(0,1)$ almost surely and thus its average approaches to unity when $\epsilon/k$ is sufficiently small such that $\sqrt[\tau]{\epsilon/k}\leq \epsilon$. That means $(\beta\,R^{\alpha})\Psi^{(m-1)}(\beta\,R^{\alpha})$ is nearly equal to one almost surely and thus $\lambda_t=\Theta(\epsilon)$. If $k$ is sufficiently large, then we have $\exp(-\mu_u\,\lambda_t\Delta_1)=1-\mu_u\,\lambda_t\Delta_1+\Theta(\epsilon^2)$. Substituting this expression into \eqref{Eqn:PsiMphi}, $(\beta\,R^{\alpha})\Psi^{(m-1)}(\beta R^{\alpha})$ can be reduced to $1- \mu_u\,\lambda_t\,\Delta_1\,\prod_{j=1}^{m-1}(1-\xi/j)+\Theta(\epsilon^2)$.
Define $\hat{\Delta}_1(\phi)\defn [\Delta_1(\phi,\infty)-\Delta_1(\phi,\hat{a}_B)]\prod_{j=1}^{m-1}(1-\xi/j)$. Choosing $\hat{a}_B\leq 1$, we have
\begin{eqnarray*}
\mathbb{E}_{R}\left[(1-(\beta\,R^{\alpha})\Psi^{(m-1)}(\beta\,R^{\alpha}))^{\tau}\right]\leq \frac{\left[\hat{\Delta}_1(\beta)\mu_r\lambda_t\right]^{\tau}}{\tau+1}+\Theta(\epsilon^2).
\end{eqnarray*}
Another upper bound for the above equation can be obtained by \eqref{Eqn:MultiOutProb2} and \eqref{Eqn:NonCoopConnIntenNakaFading}, and it should coincide with the above upper bound when $\lambda_t$ is maximized as shown in the following.
\begin{equation}\label{Eqn:MaxIntensityDenseNwks}
\bar{\lambda}_{\epsilon} = \frac{\sqrt[\tau]{\epsilon\,(\tau+1)}}{\mu_r\,\beta^{\xi}\,\sqrt[\tau]{k}\,\hat{\Delta}_1(\beta)}.
\end{equation}
So \eqref{Eqn:MaxContenIntenNoCoop} is obtained and the proof is complete.

\bibliographystyle{ieeetran}
\bibliography{IEEEabrv,Ref_MTC}

\end{document}